\newtheorem{theorem}{Theorem}
\newtheorem{lemma}{Lemma}
\newtheorem{corollary}{Corollary}
\newtheorem{proof}{Proof}
\newtheorem{proposition}{Proposition}
\begin{document}

\title{\LARGE Optimal Power Allocation for Artificial Noise under Imperfect CSI against Spatially Random Eavesdroppers}
\author{Tong-Xing Zheng,~\IEEEmembership{Student~Member,~IEEE},~and
        Hui-Ming Wang,~\IEEEmembership{Member,~IEEE}

\thanks{
\copyright 2015 IEEE. Personal use of this material is permitted. However, permission to use this material for any other purposes must be obtained from the IEEE by sending a request to pubs-permissions@ieee.org.
}
\thanks{This work was partially supported by the Foundation for the Author of National Excellent Doctoral Dissertation of China under Grant 201340, the National High-Tech Research and Development Program of China under Grant No. 2015AA011306, the New Century Excellent Talents Support Fund of China under Grant NCET-13-0458, the Fok Ying Tong Education Foundation under Grant 141063, the Fundamental Research Funds for the Central University under Grant No. 2013jdgz11, and
the Young Talent Support Fund of Science and Technology of Shaanxi Province under Grant 2015KJXX-01.
The review of this paper was coordinated by Prof. G. Mao.
}
\thanks{
The authors are with the School of Electronics and Information Engineering, and also with the MOE Key Lab for Intelligent Networks and Network Security, Xi¡¯an Jiaotong University, Xi'an, 710049, Shaanxi, China (e-mail: txzheng@stu.xjtu.edu.cn; xjbswhm@gmail.com). H.-M. Wang is the corresponding author.
}}

\maketitle
\vspace{-0.8 cm}

\begin{abstract}
    In this correspondence, we study the secure multi-antenna transmission with artificial noise (AN) under imperfect channel state information in the presence of spatially randomly distributed eavesdroppers.
    We derive the optimal solutions of the power allocation between the information signal and the AN for
    minimizing the secrecy outage probability (SOP) under a target secrecy rate and for maximizing the secrecy rate under a SOP constraint, respectively.
    Moreover, we provide an interesting insight that channel estimation error affects the optimal power allocation strategy in opposite ways for the above two objectives.
     When the estimation error increases, more power should be allocated
     to the information signal if we aim to decrease the rate-constrained SOP,
    whereas more power should be allocated
    to the AN if we aim to increase the SOP-constrained secrecy rate.
\end{abstract}

\begin{IEEEkeywords}
    Physical layer security, artificial noise, multi-antenna, secrecy outage, power allocation, imperfect CSI.
\end{IEEEkeywords}

\IEEEpeerreviewmaketitle

\section{Introduction}

Physical layer security (PLS), which achieves secure transmissions by exploiting the randomness of wireless channels, has drawn considerable attention recently \cite{Yang2015Safeguarding}, \cite{WangMaga15}.
It has been shown that we are able to greatly improve PLS using multi-antenna techniques with global channel state information (CSI).
However, to acquire the CSI of an eavesdropper is very difficult in real wiretap scenarios, since the eavesdropper is usually passive.
Without the eavesdropper's CSI, Goel \emph{et al.} \cite{Goel2008Guaranteeing} proposed a so-called artificial noise (AN) aided multi-antenna transmission strategy, in which the transmitter masked the information-bearing signal by injecting isotropic AN into the null space of the main channel (from the transmitter to a legitimate receiver), thus creating non-decodable interference to potential eavesdroppers while without impairing the legitimate receiver.
This seminal work has unleashed a wave of innovation \cite{Zhang2013Design}-\cite{WangTSP}, and
the AN scheme has become a promising approach to safeguarding wireless communications.

In practice, the CSI of the main channel is acquired by training, channel estimation and feedback, which inevitably result in CSI imperfection.
Some endeavors have studied the AN scheme allowing for imperfect CSI.
For example,
robust beamforming schemes have been  proposed in \cite{Mukherjee2011Robust} for MIMO systems and in \cite{WangTVT} for cooperative relay systems. The effects of channel quantized feedback to the AN scheme are discussed in \cite{LinTWC} and \cite{ Zhang2015Artificial}, while in \cite{WangTSP}, training and feedback have been jointly investigated and optimized.

However, all the aforementioned works ignored the uncertainty of eavesdroppers' spatial positions.
Generally, eavesdroppers are geographically distributed randomly, especially in large-scale wireless networks.
Analyzing secrecy performance in such random wiretap scenarios is fundamentally different from that with deterministic eavesdroppers's locations.

Recently, stochastic geometry theory has provided a powerful tool to analyze network performance by modeling nodes' positions according to some spatial distributions such as a Poisson point process (PPP) \cite{Haenggi2009Stochastic}; it facilitates the study of the AN scheme against random eavesdroppers \cite{Ghogho2011Physical}-\cite{Zheng2015Multi}.
However, the impact of imperfect CSI on designing the AN is still an open problem.
Particularly, it is yet unknown what the optimal power allocation strategy is, and how a channel estimation error influences power allocation and secrecy performance.
Due to the complicated/implicit forms of the objective functions caursed by location randomness and CSI imperfection, previous works can only obtain the optimal power allocation either by exhaustive search or by numerical calculation instead of providing a tractable expression.
This makes it challenging to reveal an explicit analytical relationship between the optimal power allocation and the channel estimation error.
Our research are motivated by the above observations and challenges.

In this correspondence, we study an AN-aided multi-input single-output (MISO) secure transmission against randomly located eavesdroppers under imperfect channel estimation.
We investigate two important performance metrics, namely,
secrecy outage probability (SOP) and secrecy rate, respectively.
The SOP reflects the quality difference between the main and wiretap channels;
the secrecy rate measures the rate efficiency of secure transmission.
We provide the optimal power allocation strategies for the following optimization problems:
\begin{enumerate}
\item Minimizing the SOP subject to a secrecy rate constraint;

\item Maximizing the secrecy rate subject to a SOP constraint.
\end{enumerate}
Furthermore, we draw an interesting conclusion that channel estimation error influences the optimal power allocation in \emph{opposite ways} for the above two objectives.
 \emph{When the estimation error increases, more power should be allocated to the information signal if we aim to decrease the rate-constrained SOP, whereas more power should be given to the AN if we aim to increase the SOP-constrained secrecy rate.}

To the best of our knowledge, we are the first to reveal an explicit analytical relationship between the optimal power allocation and channel estimation error through strict mathematical proofs.
Although existing works have also shown that AN should be exploited to increase the secrecy rate under imperfect CSI in point-to-point transmissions, their conclusions are just extracted from simulations under specific parameter settings, which may not apply to more general cases.

\emph{Notations}:
$(\cdot)^{\dag}$, $(\cdot)^{\mathrm{T}}$, $|\cdot|$, $\|\cdot\|$ denote conjugate, transpose, absolute value, and Euclidean norm, respectively.
$\mathcal{CN}$ denotes the circularly symmetric complex Gaussian distribution with zero mean and unit variance.
$\mathbb{C}^{m\times n}$ denotes the $m\times n$ complex number domain.

\section{System Model and Problem Description}
Consider a secure transmission from a transmitter (Alice) to a legitimate receiver (Bob) overheard by randomly located eavesdroppers (Eves)\footnote{This may correspond to such a scenario that a multi-antenna transmitter Alice provides specific service to a specified subscriber Bob, while the service should be kept secret to eavesdroppers (also named unauthorized users).}.
Alice has $N$ antennas, Bob and Eves each has a single antenna.
Without loss of generality, we place Alice at the origin and Bob at a deterministic position with a distance $r_B$ from Alice.
The locations of Eves are modeled as a homogeneous PPP $\Phi_E$ of density $\lambda_E$ on a 2-D plane with the $k$-th Eve a distance $r_k$ from Alice.

All wireless channels are assume to undergo flat Rayleigh fading together with a large-scale path loss governed by the exponent $\alpha>2$.
The channel vector of a node with a distance $r$ from Alice is characterized as $\bm{h}r^{-\frac{\alpha}{2}}$, where $\bm{h}\in\mathbb{C}^{N\times 1}$ denotes the small-scale fading vector, with independent and identically distributed (i.i.d.) entries $h_{i}\thicksim \mathcal{CN}$.

We focus on a \emph{frequency-division duplex} (FDD) system in which the channel reciprocity no longer holds.
 We assume Bob estimates the main channel with estimation errors, and sends the estimated channel to Alice via an ideal feedback link (e.g., a high-quality link with negligible quantization error).
 In this case, the exact main channel $\bm{h}_b$ can be modeled as
\begin{equation}\label{channel_model}
    \bm{h}_b = \sqrt{1-\tau^2}\hat{\bm{h}}_{b} + \tau\tilde{\bm{h}}_{b},
\end{equation}
where $\hat{\bm{h}}_{b}$ and $\tilde{\bm{h}}_{b}$ denote the estimated channel and estimation error with i.i.d. entries $\hat h_{b,i}, \tilde h_{b,i}\sim\mathcal{CN}(0,1)$. This assumption arises from employing the minimum mean square error (MMSE) estimation\footnote{
The Gaussian error model is a stochastic uncertainty model.
Another widely used model is the deterministic bounded error model, which is more convenient for analyzing the quantized CSI \cite{Zhang2015Artificial}.} \cite{Mukherjee2011Robust}, \cite{Geraci2014Physical}.
Here,
$\tau\in[0,1]$ denotes the error coefficient; $\tau=0$ corresponds to a perfect channel estimation, and $\tau=1$ means no CSI is acquired at all.
For each eavesdropper, although its CSI is unknown, we assume its channel statistics information is available, which is a general assumption when dealing with PLS \cite{Zhang2013Design}-\cite{Zheng2015Multi}.

Recalling the AN scheme in \cite{Goel2008Guaranteeing}, the transmitted signal vector $\bm{x}$ at Alice is designed in the form of
\begin{equation}\label{x}
  \bm{x}=\sqrt{\xi P}\bm{w}s+\sqrt{(1-\xi) P/(N-1)}\bm{Gv},
\end{equation}
where $s$ is the information signal with $\mathbb{E}[|s|^2]=1$, $\bm{v}\in\mathbb{C}^{(N-1)\times 1}$ is an AN vector with i.i.d. entries $v_i\sim\mathcal{CN}$,
and $\xi$ is the power allocation ratio (PAR) of the desired signal power to the total power $P$.
$\bm{w}\triangleq{\hat{\bm{h}}_{b}^{\dag}}/{\|\hat{\bm{h}}_{b}\|}$ is the beamforming vector for the information signal,
$\bm{G}\in\mathbb{C}^{N\times(N-1)}$ is a weighting matrix for the AN.
The columns of $\bm{W}\triangleq[\bm{w} ~\bm{G}]$ constitute an orthogonal basis.
Let $\bm{s}\triangleq\left[s~ \bm{v}^{\mathrm{T}}\right]$, and the received signals at Bob and the $k$-th Eve are given from \eqref{x}
\begin{align}
\label{signal_Bob}
    y_B  &=\sqrt{1-\tau^2}\|\hat{\bm{h}}_{b}\|^2r_B^{-\frac{\alpha}{2}}s
    +\underbrace{\tau\tilde{\bm{h}}_{b}\bm{W} \bm{s}^{\mathrm{T}}
    r_B^{-\frac{\alpha}{2}} + n_B}_{n^{o}_B},\\
\label{signal_Evek}
    y_k&=\bm{h}_{e,k}\bm{w}r_k^{-\frac{\alpha}{2}}s
        + \bm{h}_{e,k}\bm{Gv}r_k^{-\frac{\alpha}{2}} + n_k,\ \forall k\in\Phi_E,
\end{align}
where $\bm{h}_{e,k}$ denotes the channel from Alice to the $k$-th Eve, and $n_B^{o}$ combines the residual channel estimation error and thermal noise.
Without loss of generality, we assume $n_B, n_{k\in\Phi_E}\thicksim \mathcal{CN}$.
The exact capacity expression of the main channel under imperfect receiver CSI is still unavailable.
A commonly used approach is to examine a capacity lower bound by
treating $n_B^{o}$ as the worst-case Gaussian noise\footnote{
The tightness of this capacity lower bound was verified for Gaussian inputs with MMSE channel estimation in \cite{Hassibi2003How,Zhou2009Design}.}.
  By doing so, the SINRs of Bob and the $k$-th Eve are respectively given by
\begin{align}
\label{sinr_Bob}
    \gamma_B &= \xi\kappa(\tau),\\
\label{sinr_Evek}
    \gamma_k &=\frac{\xi P|\bm{h}_{e,k}^{\mathrm{T}}\bm{w}|^2r_k^{-\alpha}
    }{(1-\xi)P\|\bm{h}_{e,k}^{\mathrm{T}} \bm{G}\|^2 r_k^{-\alpha}/(N-1)+1},
\end{align}
where $\kappa(\tau)=
\frac{(1-\tau^2)P\gamma}
    {\tau^2 P+r_B^{\alpha}}$ with $\gamma\triangleq \|\hat{\bm{h}}_{b}\|^2$.
    Eq. \eqref{sinr_Evek} holds for the pessimistic assumption that the $k$-th Eve has perfect knowledge of both $\hat{\bm{h}}_{b}$ and $\tilde{\bm{h}}_{b}$.
    Given that $\tau\in[0,1]$, $\kappa(\tau)$ is a monotonically decreasing function of $\tau$;
    it reflects the accuracy of channel estimation.
Specifically, a small value of $\kappa(\tau)$ corresponds to a low estimation accuracy and vice versa.
Hereafter, we omit $\tau$ from $\kappa(\tau)$ for notational brevity.

We consider the wiretap scenario in which each Eve individually decodes a secret message.
This corresponds to a \emph{compound} wiretap channel model \cite{Liang2009Compound}, and the capacities of the main channel and the equivalent wiretap channel are $C_B = \log_2(1+\gamma_B)$ and $C_E = \log_2(1+\gamma_E)$ with $\gamma_E \triangleq \max_{k\in\Phi_E}\gamma_k$.
Note that the capacity of Eves is determined by the maximum capacity among all links connecting Alice with Eves.
As done in \cite{Zhang2013Design} and \cite{Zheng2015Multi}, after encoding secret information, Alice transmits the codewords and embedded secret messages at rates $C_B$ and $R_S$, respectively.
If \emph{at least one Eve decodes the secret messages}, i.e., $C_E$ exceeds the \emph{rate $ C_B-R_S$ of redundant information} (to protect from eavesdropping), perfect secrecy is compromised and a secrecy outage occurs;
the corresponding SOP is defined as
\begin{equation}\label{sop_def}
  \mathcal{O}\triangleq \mathbb{P}\{C_E>C_B-R_S\}, ~\forall ~C_B> R_S.
\end{equation}

In the following, we will optimize the PAR to minimize the SOP under a target secrecy rate, and to maximize the secrecy rate under a SOP constraint $\mathcal{O}\le\epsilon\in(0,1)$, respectively.
We emphasize that different from existing research with deterministic Eves' positions, the analysis and design here is much more complicated due to the extra spatial randomness.

\section{Secrecy Outage Probability Minimization}
In this section, we optimize the PAR that minimizes the SOP under a target secrecy rate.
Recalling \eqref{sop_def}, Alice transmits only when $C_B=\log_2(1+\xi\kappa)> R_S$, i.e., $\xi>\frac{2^{R_S}-1}{\kappa}$ should hold to guarantee a reliable connection between Alice and Bob.
For ease of notation, throughout the paper we define $T\triangleq 2^{R_S}$, $\omega\triangleq \frac{T-1}{\kappa}$, $\delta\triangleq \frac{2}{\alpha}$, $\beta \triangleq\pi \Gamma\left(1+\delta\right)$, $\theta\triangleq\frac{T-1}{T}$, and $\varphi\triangleq\frac{\xi^{-1}-1}{N-1}$.

The problem of minimizing $\mathcal{O}$ in \eqref{sop_def} is formulated as \begin{equation}\label{min_sop}
  \min_{\xi}~\mathcal{O},\qquad \mathrm{s.t.} \quad \omega<\xi\leq 1.
\end{equation}
Before proceeding to this optimization problem, we provide a closed-form expression of $\mathcal{O}$ over the PPP network.
\begin{lemma}\label{lemma_sop}
    \textit{If $\xi>\omega$, the SOP defined in \eqref{sop_def} is given by
    \vspace{-0.0cm}\begin{equation}\label{pso}
      \mathcal{O}=1-\exp\left(-\beta\lambda_E
      \left({P\theta^{-1}}\right)^{\delta}
      \mathcal{J}(\xi)\right),
    \end{equation}
where $\mathcal{J}(\xi)= \left({\omega}^{-1}-{\xi}^{-1}\right)^{-\delta}
\left(1+\left({\xi}{\omega}^{-1}-1\right)\theta\varphi\right)^{1-N}$.}
\end{lemma}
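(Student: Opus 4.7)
The plan is to evaluate the SOP \eqref{sop_def} conditional on the estimated channel $\hat{\bm{h}}_b$, so that $\gamma_B=\xi\kappa$ and the beamformer/AN shaping matrix $\bm{W}\triangleq[\bm{w}\ \bm{G}]$ are fixed. I would first rewrite the outage event as $\{\max_{k\in\Phi_E}\gamma_k>\gamma_{th}\}$ with threshold $\gamma_{th}\triangleq(1+\gamma_B)/T-1$, noting that the feasibility condition $\xi>\omega$ is exactly what makes $\gamma_{th}>0$. The complementary probability is then the void probability of the marked PPP $\Phi_E$. By the probability generating functional of a homogeneous PPP on $\mathbb{R}^2$,
\begin{equation}\label{pgfl}
\mathbb{P}\{\max_k\gamma_k\le\gamma_{th}\}=\exp\!\left(-\lambda_E\!\int_0^{2\pi}\!\!\int_0^{\infty}\!\mathbb{P}(\gamma_k>\gamma_{th}\mid r_k=r)\, r\,dr\,d\phi\right),
\end{equation}
which reduces the proof to computing a per-Eve CCDF and performing one radial integral.

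Next I would compute that CCDF. Since $\bm{W}$ has orthonormal columns and $\bm{h}_{e,k}$ is independent of $\bm{W}$ with i.i.d.\ $\mathcal{CN}(0,1)$ entries, the rotated vector $\bm{h}_{e,k}^{\mathrm{T}}\bm{W}$ is again i.i.d.\ $\mathcal{CN}(0,1)$. Thus $X\triangleq|\bm{h}_{e,k}^{\mathrm{T}}\bm{w}|^2\sim\mathrm{Exp}(1)$ and $Y\triangleq\|\bm{h}_{e,k}^{\mathrm{T}}\bm{G}\|^2\sim\mathrm{Gamma}(N-1,1)$ are independent. Plugging into \eqref{sinr_Evek}, taking the expectation over $X$ first (exponential tail) and then over $Y$ via the Gamma Laplace transform $(1+s)^{-(N-1)}$ produces the clean factorization
\begin{equation}\label{ccdf}
\mathbb{P}(\gamma_k>\gamma_{th}\mid r)=\exp(-a r^{\alpha})\,(1+\gamma_{th}\varphi)^{1-N},\qquad a\triangleq\gamma_{th}/(\xi P),
\end{equation}
in which the $r$-dependence sits entirely in the exponential.

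The radial integral $\int_0^{\infty}e^{-ar^{\alpha}}r\,dr=\alpha^{-1}\Gamma(\delta)\,a^{-\delta}$ combines with the $2\pi$ from the angular integral into $\beta\,a^{-\delta}$, using $\beta=\pi\Gamma(1+\delta)$. It then remains to convert $(a,\gamma_{th})$ into the compact abbreviations of the lemma: a few lines of algebra using $\kappa\omega=T-1$ give $\gamma_{th}=(\xi\omega^{-1}-1)\theta$ and $a^{-\delta}=(P\theta^{-1})^{\delta}(\omega^{-1}-\xi^{-1})^{-\delta}$. Substituting these back into \eqref{pgfl} and taking the complement yields \eqref{pso} with $\mathcal{J}(\xi)$ in the stated product form.

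The main obstacle is bookkeeping rather than insight. One must simultaneously keep $\varphi=(1-\xi)/(\xi(N-1))$, the multiplicative constant $\xi P/\gamma_{th}$, and the identity $\kappa\omega=T-1$ straight when recasting $\gamma_{th}$ and $a$ in terms of $\omega$, $\theta$, $\varphi$, so as to land on the exact product form of $\mathcal{J}(\xi)$. It is also worth emphasizing that the independence underpinning the Exp/Gamma split comes from $\bm{W}$ being a function only of $\hat{\bm{h}}_b$ and hence independent of the Eve channels $\bm{h}_{e,k}$, not from any conditioning trickery; this is what lets the two quadratic forms in \eqref{sinr_Evek} be handled by two successive moment-generating-function evaluations.
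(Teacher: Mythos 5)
Your proposal is correct and follows essentially the same route as the paper: express the outage event via the CDF of $\gamma_E=\max_{k\in\Phi_E}\gamma_k$ at threshold $x=\frac{1+\kappa\xi}{T}-1$, apply the PGFL of the homogeneous PPP, evaluate the radial integral to get $\beta a^{-\delta}$, and recast in terms of $\omega,\theta,\varphi$. The only difference is cosmetic: you derive the per-Eve CCDF $e^{-\gamma_{th}r^{\alpha}/(\xi P)}(1+\gamma_{th}\varphi)^{1-N}$ from the Exp/Gamma decomposition explicitly, whereas the paper cites it from prior work.
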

\begin{proof}
     Substitute $C_B$ and $C_E$ along with \eqref{sinr_Bob} and \eqref{sinr_Evek} into \eqref{sop_def}, and after some algebraic manipulations, we obtain $\mathcal{O}=1-\mathcal{F}_{\gamma_E}
\left(x\right)$ with $x\triangleq\frac{1+\kappa\xi}{T}-1$, where $\mathcal{F}_{\gamma_E}(x)$ is
     the cumulative distribution function (CDF) of $\gamma_{E}$, which is
\begin{align}\label{cdf_eta_e_app}
\mathcal{F}_{\gamma_E}(x)
    &=\mathbb{P}
    \left\{\max_{k\in\Phi_E}\gamma_k<x\right\}
    =\mathbb{E}_{\Phi_E}\left[\prod_{k\in\Phi_E}
    \mathbb{P}\{\gamma_k<x\}\right]\nonumber\\
  &  \stackrel{\mathrm{(a)}}
    =\mathbb{E}_{\Phi_E}
    \left[\prod_{k\in\Phi_E}\left(
    1-e^{-\frac{r_k^{\alpha}x}{P\xi}}(1+\varphi x)^{1-N} \right)   \right]\nonumber\\
    &\stackrel{\mathrm{(b)}}=
    \exp\left(-2\pi\lambda_E (1+\varphi x)^{1-N}
    \int_0^{\infty}
    e^{-\frac{r^{\alpha}x}{P\xi}}
    rdr\right)\nonumber\\
   & =\exp\left(-\beta\lambda_E(P\xi)^{\delta}
    x^{-\delta}(1+\varphi x)^{1-N}\right),
\end{align}
     where (a) holds for the CDF of $\gamma_k$ \cite{Zhang2013Design}, and (b) holds for the probability generating functional (PGFL) over a PPP \cite{Stoyan1996Stochastic}.
Substituting \eqref{cdf_eta_e_app} into $\mathcal{O}$ completes the proof.
\end{proof}

\begin{figure}[!t]
\centering
\includegraphics[width = 3.0in]{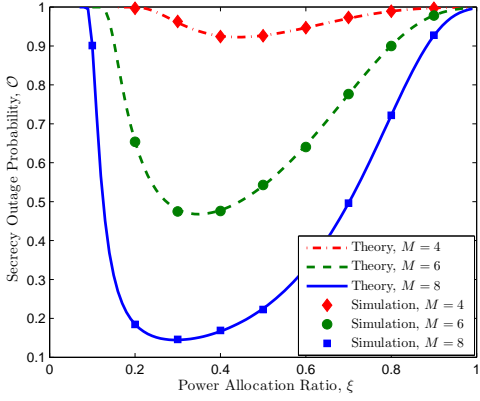}
\caption{SOP $\mathcal{O}$ vs. $\xi$ for different values of $M$, with $P=10$dBm, $R_S=2$, $\tau=0.3$, and $\lambda_E=2$.
Unless otherwise specified, we set $\alpha=4$, $r_B=1$ (unit distance), and the unit of $R_S$ is bits/s/Hz.}
\label{SOP_PHI}
\vspace{-0.0 cm}
\end{figure}

The theoretical values of $\mathcal{O}$ are well verified by Monte-Carlo simulations, as shown in Fig. \ref{SOP_PHI}.
We see that adding transmit antennas is beneficial for decreasing the SOP.
We also observe that as $\xi$ increases, $\mathcal{O}$ first decreases and then increases; there exists a unique $\xi$ that minimizes $\mathcal{O}$.
 In the following we are going to calculate the value of this unique $\xi$.
From \eqref{pso}, it is apparent that minimizing $\mathcal{O}$ is equivalent to minimizing $\mathcal{J}(\xi)$.
The first-order derivative of $\mathcal{J}(\xi)$ on $\xi$ is given by
\begin{equation}\label{dJ1}
  \frac{d\mathcal{J}(\xi)}{d\xi} = \frac{\theta\mathcal{J}(\xi)(\xi^3+a\xi^2+b\xi+c)}
  {\xi^2\left(\xi-\omega\right)(\omega+(\xi-\omega)\theta\varphi)},
\end{equation}
where $a\triangleq-l_1\omega$, $b\triangleq-\frac{\delta}{\theta}\omega^2-l_0\omega^2-l_2\omega$, and $c\triangleq l_2\omega^2$, with $l_0\triangleq\frac{\delta}{N-1}$, $l_1\triangleq 1-l_0$, and $l_2\triangleq 1+l_0$.
Let $\mathcal{K}(\xi)=\xi^3+a\xi^2+b\xi+c$.
 Since $\xi>\omega$, the sign of $\frac{d\mathcal{J}(\xi)}{d\xi}$ follows that of $\mathcal{K}(\xi)$.
In other words, to investigate the monotonicity of $\mathcal{J}(\xi)$ on $\xi$, we need to just examine the sign of $\mathcal{K}(\xi)$.
In the following theorem, we provide the solution to problem \eqref{min_sop}.
\begin{theorem}\label{opt_par_sop_theorem}
\textit{     The optimal PAR that minimizes $\mathcal{O}$ in \eqref{min_sop} is
\begin{align}\label{opt_par_sop}
    \xi^* = \begin{cases}
    ~\varnothing, &0<\kappa\le T-1\\
    ~ 1, &T-1<\kappa\le (T-1)\left(1+\sqrt{{\delta}/{\theta}}\right)\\
    ~\xi_o,&\text{otherwise}
    \end{cases}
\end{align}
    where $\xi_o= \sqrt[3]{q+p}+ \sqrt[3]{q-p} -\frac{a}{3}$ with $p\triangleq \sqrt{\left(\frac{b}{3}-\frac{a^2}{9}\right)^3
    +q^2}$ and $q\triangleq \frac{ab}{6}-\frac{c}{2}-\frac{2a^3}{54}$, and $a$, $b$, $c$ have been defined in \eqref{dJ1}.
    $\xi^* =\varnothing$ means that transmission is suspended. }
\end{theorem}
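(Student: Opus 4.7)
The plan is to reduce the constrained minimization to a sign analysis of the cubic $\mathcal{K}(\xi)$ on the feasible set $(\omega,1]$ and then locate its unique root there. First I would address feasibility: the constraint $\omega<\xi\le 1$ is non-empty iff $\omega<1$, i.e.\ $\kappa>T-1$; when this fails, no admissible $\xi$ exists and Alice must suspend transmission, giving $\xi^*=\varnothing$. Since the prefactor $\theta\mathcal{J}(\xi)/[\xi^2(\xi-\omega)(\omega+(\xi-\omega)\theta\varphi)]$ in \eqref{dJ1} is strictly positive on $(\omega,1]$, the sign of $d\mathcal{J}/d\xi$ coincides with that of $\mathcal{K}(\xi)$, so it suffices to understand where the cubic changes sign on the feasible interval.

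Next, I would evaluate $\mathcal{K}$ at the two natural test points. Using $l_1=1-l_0$, most of the terms in $\mathcal{K}(\omega)$ cancel and the computation collapses to $\mathcal{K}(\omega)=-(\delta/\theta)\omega^3<0$, so $\mathcal{J}$ is strictly decreasing at the left endpoint. Using $l_1+l_2=2$ and $l_2-l_0=1$, a parallel substitution yields $\mathcal{K}(1)=(1-\omega)^2-(\delta/\theta)\omega^2$. Solving $\mathcal{K}(1)\le 0$ (taking square roots is legal because both sides are non-negative when $\omega<1$) and converting $\omega=(T-1)/\kappa$ back to $\kappa$ gives exactly the threshold $\kappa\le(T-1)(1+\sqrt{\delta/\theta})$ that separates the second and third cases of \eqref{opt_par_sop}.

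The core step is to show that $\mathcal{K}$ has exactly one root in $(\omega,\infty)$ and that this root is a \emph{minimum} of $\mathcal{J}$. Since $\mathcal{K}(0)=l_2\omega^2>0$ and $\mathcal{K}(\omega)<0$, one real root lies in $(0,\omega)$, and since $\mathcal{K}(\xi)\to+\infty$, a second real root lies in $(\omega,\infty)$; because a cubic has either one or three real roots, $\mathcal{K}$ must have three distinct real roots $r_1<r_2<r_3$. A quick check that the discriminant of $\mathcal{K}'$ satisfies $a^2-3b>0$ confirms distinct local extrema $\xi_1<\xi_2$ of $\mathcal{K}$, so the roots obey the standard ordering $r_1<\xi_1<r_2<\xi_2<r_3$. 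The sign pattern of $\mathcal{K}$ at $0$ and $\omega$ then forces $0\in(r_1,r_2)$ and $\omega\in(r_2,r_3)$, so $r_3$ is the unique root in $(\omega,\infty)$ and $d\mathcal{J}/d\xi$ switches from $-$ to $+$ there. Consequently, when $\mathcal{K}(1)\le 0$ one has $r_3\ge 1$ and the minimum on $(\omega,1]$ is attained at the boundary $\xi^*=1$, while when $\mathcal{K}(1)>0$ the root $r_3$ lies in $(\omega,1)$ and is the unconstrained minimizer.

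For this last case I would apply Cardano's formula to $\mathcal{K}(\xi)=0$ via the standard depression $\xi=\eta-a/3$, which turns the equation into $\eta^3+P\eta+Q=0$ with $P=b-a^2/3$ and $Q=c-ab/3+2a^3/27$; taking principal real cube roots then recovers the claimed closed form $\xi_o=\sqrt[3]{q+p}+\sqrt[3]{q-p}-a/3$ once one checks $-Q/2=q$ and $Q^2/4+P^3/27=p^2$. The hardest part will be the third step: merely locating a root of $\mathcal{K}$ in $(\omega,\infty)$ is immediate, but pinning down its uniqueness and its character as a minimum rather than a maximum or an inflection requires the ordering argument between the local extrema of $\mathcal{K}$ and the signs of $\mathcal{K}$ at $0$ and $\omega$.
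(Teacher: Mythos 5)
Your proposal is correct and ends up with the same skeleton as the paper (feasibility via $\omega<1$, the boundary values $\mathcal{K}(\omega)=-\frac{\delta}{\theta}\omega^3<0$ and $\mathcal{K}(1)=(1-\omega)^2-\frac{\delta}{\theta}\omega^2$, the case split on the sign of $\mathcal{K}(1)$, and Cardano for the interior case), but your core step is genuinely different. The paper notes that $\frac{d^2\mathcal{K}}{d\xi^2}=6\xi-2l_1\omega>0$ on $(\omega,1]$, so $\mathcal{K}$ is convex there; together with $\mathcal{K}(\omega)<0$ this immediately gives either $\mathcal{K}\le 0$ on the whole interval (when $\mathcal{K}(1)\le 0$, so $\mathcal{J}$ decreases and $\xi^*=1$) or exactly one sign change from negative to positive (when $\mathcal{K}(1)>0$), with no need to look outside the feasible set. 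You instead localize all three roots of the cubic globally, using $\mathcal{K}(0)=l_2\omega^2>0$, $\mathcal{K}(\omega)<0$ and the behaviour at $\pm\infty$, and read the sign pattern on $(\omega,1]$ off the ordering $r_1<0<r_2<\omega<r_3$. Both routes are valid; the paper's convexity argument is shorter and stays on the feasible interval, while yours buys the additional fact that $\xi_o$ is the \emph{largest} real root of $\mathcal{K}$, which is precisely the information needed to know which root Cardano's formula must return. Two points to tighten: (i) distinctness of the three real roots does not follow merely from ``a cubic has one or three real roots''; it follows because the sign changes on $(0,\omega)$ and $(\omega,\infty)$ force simple roots there, and $\mathcal{K}(\xi)\to-\infty$ as $\xi\to-\infty$ together with $\mathcal{K}(0)>0$ places the third root in $(-\infty,0)$ (your $a^2-3b>0$ check is then not needed); (ii) since in the interior case $\mathcal{K}$ has three distinct real roots, the quantity $\left(\frac{b}{3}-\frac{a^2}{9}\right)^3+q^2$ is negative, so $p$ is purely imaginary and the two cube roots in $\xi_o$ are complex conjugates rather than ``principal real cube roots''; under the principal-branch convention their sum is real and equals the largest root $r_3$, which is the desired one---a subtlety the paper's own proof also glosses over, so it does not put you below the paper's level of rigor, but your phrasing should be corrected.
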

\begin{proof}
We know that Alice transmits only when $\xi>\omega$.
1)
If $\omega\ge 1$, i.e., $\kappa\le T-1$, no feasible $\xi\in[0,1]$ satisfies $\xi>\omega$, and transmission is suspended.
2)
If $\omega<1$, Alice transmits in the range of $\xi\in(\omega,1]$.
   Next, we derive the optimal value of $\xi$ that minimizes $\mathcal{J}(\xi)$.

We first prove the convexity of $\mathcal{K}(\xi)$ on $\xi\in(\omega,1]$.
From the expression of $\mathcal{K}(\xi)$, we have $\frac{d^2\mathcal{K}(\xi)}{d\xi^2}=6\xi-2l_1\omega>0$, i.e., $\mathcal{K}(\xi)$ is a \emph{convex} function of $\xi$.
Then we determine the sign of $\mathcal{K}(\xi)$.
     The values of $\mathcal{K}(\xi)$ at boundaries $\xi=\omega$ and $\xi=1$ are $\mathcal{K}(\omega)=-\frac{\delta}{\theta}\omega^3$ and $\mathcal{K}(1)=(1-\omega)^2
     -\frac{\delta}{\theta}\omega^2$, respectively.
Obviously, $\mathcal{K}(\omega)<0$ always holds.
Next, we discuss the optimal value of $\xi$ for the following two cases.

\emph{Case 1}: $\mathcal{K}(1)\le0$.
Since $\mathcal{K}(\xi)$ is convex on $\xi\in(\omega, 1]$, $\mathcal{K}(\xi)$ or $\frac{d\mathcal{J}(\xi)}{d\xi}$ is always negative.
Hence $\mathcal{J}(\xi)$ monotonically decreases with $\xi$, and the minimum $\mathcal{J}(\xi)$ is achieved at $\xi=1$, with the corresponding condition obtained from $\mathcal{K}(1)\le 0$, which is $\frac{1}{1+\sqrt{\delta/\theta}}\le\omega<1$.

\emph{Case 2}: $\mathcal{K}(1)>0$.
It means $\mathcal{K}(\xi)$ or $\frac{d\mathcal{J}(\xi)}{d\xi}$ becomes first negative and then positive as $\xi$ increases from $\omega$ to 1, i.e., $\mathcal{J}(\xi)$ first decreases and then increases with $\xi$, and the optimal value of $\xi$ is the unique root of the cubic equation $\mathcal{K}(\xi)=0$.
     Solving this equation using Cardano's formula yields $\xi_o$.

Combining \emph{Case 1} and \emph{Case 2} completes the proof.
\end{proof}

Theorem \ref{opt_par_sop_theorem} indicates that when the value of $\kappa$ is small which corresponds to a poor link quality or a large channel estimation error, Alice either suspends the transmission or transmits with full power.
When the value of $\kappa$ becomes large enough, it is wise to create AN to decrease the SOP.
The resulting minimum SOP, denoted as $\mathcal{O}^*$, is obtained by substituting $\xi^*$ into \eqref{pso}.

Next, we investigate the influence of channel estimation error on the optimal PAR.
Although we obtain a closed-form expression of $\xi_o$ in \eqref{opt_par_sop}, it is complicated to reveal the explicit connection between $\xi_o$ and $\kappa$.
Nevertheless, by leveraging the equation $\mathcal{K}(\xi_o)=0$, we develop some insights into the behavior of $\xi_o$ with respect to $\kappa$ in the following proposition.
\begin{proposition}\label{opt_par_tau_proposition}
    \textit{    $\xi_o$ monotonically decreases with $\kappa$.}
\end{proposition}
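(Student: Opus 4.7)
The plan is to reduce the claim to a monotonicity statement in $\omega$: since $\omega = (T-1)/\kappa$ is strictly decreasing in $\kappa$, proving $d\xi_o/d\kappa < 0$ is equivalent to proving $\xi_o$ is strictly \emph{increasing} in $\omega$ on the regime where Case 2 of Theorem~\ref{opt_par_sop_theorem} applies. Rather than differentiating the unwieldy Cardano expression for $\xi_o$, I would parametrize the optimal curve by the ratio $y \triangleq \xi_o/\omega$ (which satisfies $y>1$ since $\xi_o>\omega$). Plugging $\xi_o = \omega y$ into $\mathcal{K}(\xi_o)=0$ and dividing through by $\omega^2$ yields
\begin{equation*}
\omega(y) = \frac{l_2(y-1)}{y\bigl(y^2 - l_1 y - (\delta/\theta + l_0)\bigr)},\quad \xi_o(y) = \frac{l_2(y-1)}{y^2 - l_1 y - (\delta/\theta + l_0)}.
\end{equation*}

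Next I would show that both $\xi_o(y)$ and $\omega(y)$ are strictly decreasing in $y$ on the admissible range. Applying the quotient rule to $\xi_o(y)$ and exploiting the key identity $l_0 + l_1 = 1$ (which is immediate from the definitions $l_0 = \delta/(N-1)$, $l_1 = 1-l_0$), the numerator of $d\xi_o/dy$ collapses to $l_2\bigl[-(y-1)^2 - \delta/\theta\bigr]$, which is manifestly negative. Since $\omega(y) = \xi_o(y)/y$, a single line then gives $d\omega/dy = \bigl(y\,d\xi_o/dy - \xi_o\bigr)/y^2 < 0$, as both summands in the numerator are negative.

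Once both $\xi_o$ and $\omega$ are established as smooth, strictly decreasing functions of the common parameter $y$, the chain rule delivers $d\xi_o/d\omega = (d\xi_o/dy)/(d\omega/dy) > 0$, and composing with $d\omega/d\kappa = -(T-1)/\kappa^2 < 0$ yields $d\xi_o/d\kappa < 0$, which is the claim.

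I expect the main obstacle to be the algebraic simplification in $d\xi_o/dy$: without exploiting the identity $l_0 + l_1 = 1$ to cancel the constant $1-l_0-l_1$ term, the numerator does not visibly factor into a negative quantity. An alternative would be direct implicit differentiation, giving $d\xi_o/d\omega = -\mathcal{K}_\omega/\mathcal{K}_\xi$ with $\mathcal{K}_\xi(\xi_o) > 0$ already guaranteed by the convexity-plus-sign-change argument used in the proof of Theorem~\ref{opt_par_sop_theorem}; however, showing $\mathcal{K}_\omega(\xi_o,\omega)<0$ would then require substituting the cubic constraint $\mathcal{K}(\xi_o,\omega)=0$ to eliminate the $\xi_o^3$ term, which is essentially the same cancellation in a less transparent form.
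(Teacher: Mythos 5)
Your proof is correct, and it takes a genuinely different route from the paper's. The paper differentiates implicitly, writing $\frac{d\xi_o}{d\omega}=-\frac{\partial\mathcal{K}/\partial\omega}{\partial\mathcal{K}/\partial\xi_o}$, and then uses the constraint $\mathcal{K}(\xi_o)=0$ to eliminate $\delta/\theta$, establish the auxiliary bound $\xi_o>\sqrt{\omega}$, and sign the numerator and denominator separately --- exactly the ``less transparent'' cancellation you predicted in your closing remark. You instead rationally parametrize the solution curve by $y=\xi_o/\omega$, which linearizes the cubic constraint in $\omega$ and yields the closed forms $\omega(y)$ and $\xi_o(y)$; the identity $l_0+l_1=1$ then collapses the numerator of $d\xi_o/dy$ to $l_2\bigl[-(y-1)^2-\delta/\theta\bigr]<0$, and the conclusion follows by the chain rule. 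Your route buys explicit formulas and a one-line sign argument with no need for the $\xi_o>\sqrt{\omega}$ lemma; the paper's route stays entirely within the original variables and needs no discussion of the parametrization. The only point you should make explicit is that the branch $y>1$ (with $y^2-l_1y-\delta/\theta-l_0>0$, so that $\omega(y)>0$) is in bijection with the admissible $\omega$ and indeed returns the optimizer: since $\mathcal{K}$ is convex on $[\omega,\infty)$ and $\mathcal{K}(\omega)<0$ (both already shown in the proof of Theorem~\ref{opt_par_sop_theorem}), the cubic has exactly one root exceeding $\omega$ for each $\omega>0$, so your parametrized curve restricted to $y>1$ is precisely the graph of $\omega\mapsto\xi_o$, $\omega(y)$ maps that interval bijectively onto the relevant range, and the inverse-function step $d\xi_o/d\omega=(d\xi_o/dy)/(d\omega/dy)>0$ is legitimate. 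With that sentence added, the argument is complete.
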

\begin{proof}
Since $\omega=\frac{T-1}{\kappa}$, to complete the proof, we need to just prove the monotonicity of $\xi_o$ on $\omega$.
 Utilizing the derivative rule for implicit functions \cite{Jittorntrum1978Implicit} with $\mathcal{K}(\xi_o) = 0$, we obtain
\begin{equation}\label{dxi_domega}
      \frac{d\xi_o}{d\omega}=
\frac{-\partial\mathcal{K}
      /\partial\omega}{\partial \mathcal{K}/\partial\xi_o}=
      \frac{l_1\xi_o^2+{2\frac{\delta}{\theta}\omega}\xi_o
      +2l_0\omega\xi_o+l_2\xi_o-2l_2\omega}
      {3\xi_o^2-2l_1\omega\xi_o-\frac{\delta}{\theta}\omega^2
      -l_2\omega-l_0\omega^2}.
\end{equation}
    Substituting $a$, $b$ and $c$ defined in \eqref{dJ1} into $\mathcal{K}(\xi_o) = 0$ yields
\begin{equation}\label{theta_xi}
    \frac{\delta}{\theta} =\frac{(\xi_o^2+l_0\omega\xi_o-l_2\omega)(\xi_o-\omega)}
      {\omega^2\xi_o}.
\end{equation}
    Since $\xi_o>\omega$ and $\frac{\delta}{\theta}>0$, the term $\xi_o^2+l_0\omega\xi_o-l_2\omega$ in \eqref{theta_xi} satisfies the following inequality
    \vspace{-0.0cm}\begin{equation}\label{theta_xi_inequ}
 0<\xi_o^2+l_0\omega\xi_o-l_2\omega
      <\xi_o^2+l_0\xi_o^2-l_2\omega
<l_2(\xi_0^2-\omega)\Rightarrow \xi_o>\sqrt{\omega}.\nonumber
    \vspace{-0.0cm}\end{equation}
    Substituting $\frac{\delta}{\theta}$ in \eqref{theta_xi} into \eqref{dxi_domega} yields the numerator
    $-\frac{\partial\mathcal{K}}{\partial\omega}=
    \frac{\xi_o}{\omega}[l_1\xi_o(\xi_o-\omega)
      +l_2(\xi_o^2-\omega)]>0$
    and denominator $\frac{\partial \mathcal{K}}{\partial\xi_o}=l_1\xi_o(\xi_o-\omega)
    +\frac{l_2}{\xi_o}(\xi_o^3-\omega^2)>0$, hence we have $\frac{d\xi_o}{d\omega}>0$.
      Combined with
      $\omega=\frac{T-1}{\kappa}$, we directly obtain $\frac{d\xi_o}{d\kappa}=
      \frac{d\xi_o}{d\omega}\frac{d\omega}{d\kappa}<0$, which completes the proof.
\end{proof}

\begin{figure}[!t]
\centering
\includegraphics[width = 3.0in]{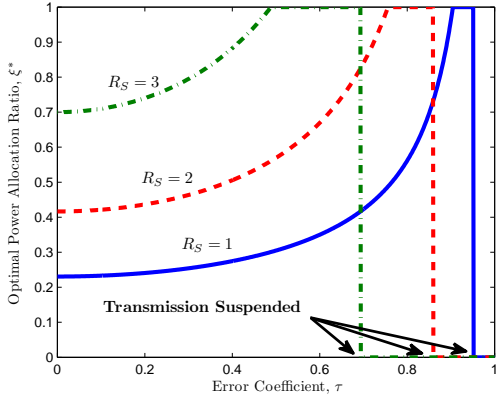}
\caption{Optimal PAR $\xi^*$ vs. error coefficient $\tau$ for different values of $R_S$, with $P=0$dBm, $N=\gamma =20$, and $\lambda_E=2$.}
\label{PAR_SOP_RS}
\vspace{-0.0 cm}
\end{figure}

Proposition \ref{opt_par_tau_proposition} shows that, \emph{when the channel estimation error gets larger, if we aim to decrease the SOP under a target secrecy rate, we should increase the information signal power}, which is validated in Fig. \ref{PAR_SOP_RS}.
It is because that, in order to minimize the SOP, we should first guarantee the link quality of the main channel to support the target secrecy rate.
Hence, we should increase the information signal power to balance the deterioration caused by the channel estimation error.
When $\tau$ exceeds a certain value, transmission is suspended, which is just as analyzed previously.
We also find from Fig. \ref{PAR_SOP_RS} that the value of $\xi^*$ increases as $R_S$ increases, which can be easily confirmed by the fact $\frac{d\xi_o}{dT}=
\frac{d\xi_o}{d\omega}\frac{d\omega}{dT}>0$.

\begin{figure}[!t]
\centering
\includegraphics[width =3.0in]{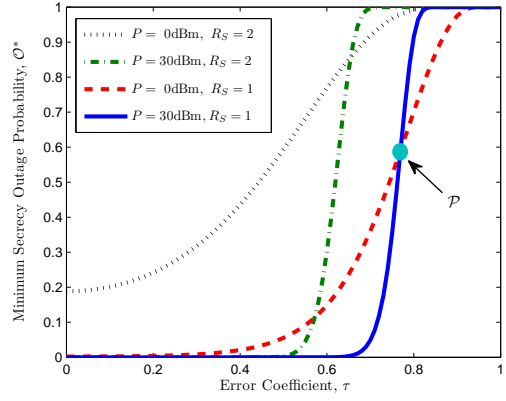}
\caption{Minimum SOP $\mathcal{O}^*$ vs. $\tau$ for different values of $P$ and $R_S$, with $N=\gamma =20$, and $\lambda_E=2$.}
\label{SOP_TAU_P_RS}
\vspace{-0.0 cm}
\end{figure}

Fig. \ref{SOP_TAU_P_RS} shows that the minimum SOP $\mathcal{O}^*$ increases with $\tau$.
For a given $P$, $\mathcal{O}^*$ increases with $R_S$.
For a given $R_S$, the two curves with different values of $P$ cross as $\tau$ increases (see the intersection $\mathcal{P}$).
Specifically, before $\tau$ exceeds $\mathcal{P}$, increasing $P$ decreases $\mathcal{O}^*$, and after that the opposite happens.
This transition occurs because for too large an estimation error, increasing transmit power does not significantly improve Bob's capacity, whereas it is of great benefit to Eves.
This result implies that using full power is not always advantageous, particularly when the estimation error is large. 
\section{Secrecy Rate Maximization}
In this section, we optimize the PAR that maximizes the secrecy rate subject to a SOP constraint.
We first transform the SOP constraint $\mathcal{O}\le\epsilon$ into the following equivalent form
\begin{align}\label{SOP_constraint}
1-\mathcal{F}_{\gamma_E}
  \left(\frac{1+\kappa\xi}{2^{R_S}}-1\right)
  &\stackrel{\mathrm{(c)}}\le\epsilon
\Rightarrow
  \frac{1+\kappa\xi}{2^{R_S}}-1 \geq \mathcal{F}_{\gamma_E}^{-1}(1-\epsilon)\nonumber\\
&  \Rightarrow
  R_S\le\log_2\frac{1+\kappa\xi}{1+\varrho(\xi)\xi},
\end{align}
where (c) holds due to the monotonically increasing feature of the CDF $\mathcal{F}_{\gamma_E}(x)$ on $x$.
$\varrho(\xi)\triangleq
\frac{\mathcal{F}_{\gamma_E}^{-1}(1-\epsilon)}{\xi}$ with $\mathcal{F}_{\gamma_E}^{-1}(\cdot)$ the inverse function of $\mathcal{F}_{\gamma_E}(\cdot)$.
Clearly, a positive value of $R_S$ that satisfies the SOP constraint \eqref{SOP_constraint} exits only when $\varrho(\xi)<\kappa$.
The problem of maximizing $R_S$ can be formulated as
\begin{equation}\label{rs_max}
\max_{\xi}  R_S=\log_2\frac{1+\kappa\xi}
{1+\varrho(\xi)\xi}\quad
\mathrm{s.t.}~ \varrho(\xi)<\kappa,~0\leq \xi\leq 1.
\end{equation}

\begin{figure}[!t]
\centering
\includegraphics[width = 3.0in]{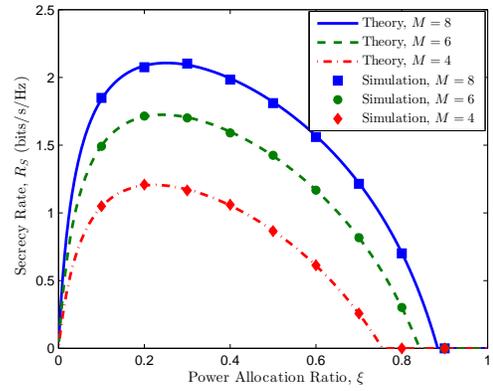}
\caption{Secrecy rate $R_S$ versus $\xi$ for different values of $M$, with $\epsilon=0.01$, $\tau=0.2$, and $\lambda_E=5$.}
\label{RS_PHI}
\end{figure}

An illustration on the relationship between the secrecy rate  and the PAR is shown in Fig. \ref{RS_PHI}.
It is intuitive that increasing the number of antennas helps to improve the secrecy rate.
We observe that, $R_S$ first increases with $\xi$, then decreases with it, and even reduces to zero for too large a $\xi$.
 This implies we should carefully choose the PAR to achieve a high secrecy rate.

From \eqref{rs_max}, we see that the value of $R_S$ is bottlenecked by $\varrho(\xi)$, which implicitly reflects the influence of the density of PPP Eves $\lambda_E$ and the SOP threshold $\epsilon$.
For example, a larger $\lambda_E$ or a smaller $\epsilon$ increases $\varrho(\xi)$ (see \eqref{pso} and the definition of $\varrho(\xi)$), and then decreases $R_S$ (see \eqref{rs_max}). Therefore, $\varrho(\xi)$ plays a critical role in maximizing $R_S$.
Although it is intractable to obtain an analytical expression of $\varrho(\xi)$ due to the transcendental equation $1-\mathcal{F}_{\gamma_E}(\xi\varrho(\xi))=\epsilon$ (see \eqref{cdf_eta_e_app}), we provide an explicit connection between $\varrho(\xi)$ and $\xi$ in the following lemma, which is very critical for the subsequent optimization.
\begin{lemma}\label{varrho_lemma}
    \textit{$\varrho(\xi)$ is a monotonically increasing and convex function of $\xi\in[0,1]$.}
\end{lemma}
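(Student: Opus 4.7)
The plan is to reduce the implicit definition of $\varrho(\xi)$ to a clean algebraic identity and then apply implicit differentiation twice. Starting from the defining equation $\mathcal{F}_{\gamma_E}(\xi\varrho(\xi)) = 1-\epsilon$ and plugging in the CDF from \eqref{cdf_eta_e_app}, the $(P\xi)^\delta$ and $(\xi\varrho)^{-\delta}$ factors collapse to $P^\delta \varrho^{-\delta}$, while the identity $\varphi\xi = (1-\xi)/(N-1)$ turns $(1+\varphi\xi\varrho)^{1-N}$ into $(1+(1-\xi)\varrho/(N-1))^{1-N}$. This leaves the tidy implicit equation
\begin{equation}
\varrho^{\delta}\left(1+\frac{(1-\xi)\varrho}{N-1}\right)^{N-1} = C,
\end{equation}
where $C \triangleq \beta\lambda_E P^\delta/(-\ln(1-\epsilon)) > 0$ is independent of $\xi$. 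This form is the workhorse for both claims.

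For monotonicity, I would apply the implicit function theorem to $g(z,\xi) \triangleq z^\delta(1+(1-\xi)z/(N-1))^{N-1}$. The partial $\partial g/\partial z$ is a sum of two positive terms, whereas $\partial g/\partial \xi$ inherits a single negative sign from differentiating $(1-\xi)$. Hence $\varrho'(\xi) = -(\partial g/\partial\xi)/(\partial g/\partial z) > 0$, and after cancelling the common $(1+(1-\xi)\varrho/(N-1))^{N-2}$ factor, the simplification yields the compact expression
\begin{equation}
\varrho'(\xi) = \frac{\varrho^2}{\delta + (1-\xi)l_2\varrho},
\end{equation}
with $l_2 = 1+\delta/(N-1)$ as already defined in the paper.

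For convexity, I would differentiate this expression once more. Writing $D \triangleq \delta + (1-\xi)l_2\varrho$ so that $\varrho' = \varrho^2/D$, the quotient rule together with $D' = -l_2\varrho + (1-\xi)l_2\varrho'$ produces a numerator $2\varrho\varrho'D - \varrho^2 D'$. After substituting $\varrho' = \varrho^2/D$ and combining over a common denominator, the numerator collapses to $\varrho^3\bigl[(2+l_2)\delta + (1+l_2)(1-\xi)l_2\varrho\bigr]$. Each of these terms is manifestly positive on $\xi \in [0,1]$, so $\varrho''(\xi) > 0$ and convexity follows.

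I expect the main obstacle to be purely algebraic bookkeeping in the convexity step: the second-derivative numerator contains terms that at first glance could mix in sign, and one has to regroup them so that the $(1-\xi)l_2\varrho$ coefficient lands on $(1+l_2)$ rather than on something possibly negative. Conceptually nothing is delicate—the implicit equation dictates the signs—but care is required with the $l_0/l_2$ shorthand and with the two occurrences of $\varrho'$ that appear in $D'$. A cleaner alternative would be to introduce $u = \ln \varrho$ and work with the logarithmic form $\delta u + (N-1)\ln(1+(1-\xi)\varrho/(N-1)) = \ln C$, which streamlines the first derivative, though the second-derivative computation is essentially the same.
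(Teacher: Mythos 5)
Your proposal is correct and follows essentially the same route as the paper: reduce the SOP constraint to the implicit equation $\varrho^{\delta}\bigl(1+\tfrac{(1-\xi)\varrho}{N-1}\bigr)^{N-1}=L$ and apply implicit differentiation, obtaining exactly the paper's expression $\varrho'=\varrho^{2}/(\delta+l_{2}(1-\xi)\varrho)$. The only cosmetic difference is in the convexity step, where you fully expand the second-derivative numerator to $\varrho^{3}\bigl[(2+l_{2})\delta+(1+l_{2})l_{2}(1-\xi)\varrho\bigr]$ (which checks out), whereas the paper writes $\varrho''$ as a sum of two terms and shows each is positive via $\varrho-(1-\xi)\varrho'>0$.
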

\begin{proof}
For notational brevity, we omit $\xi$ from $\varrho(\xi)$.
    Plugging $x=\xi\varrho$ into $1-\mathcal{F}_{\gamma_E}(x)=\epsilon$ yields
\begin{equation}\label{Z}
      \mathcal{Z}(\xi,\varrho) - L = 0,
\end{equation}
    where $\mathcal{Z}(\xi,\varrho)=
    \varrho^{\delta}
    \left(1+\varrho\frac{1-\xi}{N-1}\right)^{N-1}$, and $L\triangleq \frac{\beta\lambda_EP^{\delta}}{-\ln(1-\epsilon)}$. Using the derivative rule for implicit functions with \eqref{Z}, the first- and second-order derivatives of $\varrho$ on $\xi$ are given by
\begin{align}\label{dx1}
      \frac{d\varrho}{d\xi}
      &=-\frac{{\partial \mathcal{Z}}/{\partial\xi}}
      {{\partial \mathcal{Z}}/{\partial \varrho}}
      =\frac{\varrho^2}{\delta+l_2(1-\xi)\varrho},\\
\label{dx2}
  \frac{d^2\varrho}{d\xi^2}
  &=\frac{2}{\varrho}\left(\frac{d\varrho}{d\xi}\right)^2
  +\frac{l_2\varrho^2\left(\varrho-(1-\xi)\frac{d\varrho}
  {d\xi}\right)}
  {\left(\delta+l_2(1-\xi)\varrho\right)^2}.
\end{align}
Clearly, $\frac{d\varrho}{d\xi}>0$ always holds.
With \eqref{dx1}, we have $\varrho-(1-\xi)\frac{d\varrho}{d\xi}=
\frac{\delta\varrho+l_0(1-\xi)\varrho^2}
{\delta+l_2(1-\xi)\varrho}>0$ in \eqref{dx2}.
Removing the second term from the right-hand side of \eqref{dx2} yields $\frac{d^2\varrho}{d\xi^2}
  >\frac{2}{\varrho}\left(\frac{d\varrho}{d\xi}\right)^2>0$.
With $\frac{d\varrho}{d\xi}>0$ and $\frac{d^2\varrho}{d\xi^2}>0$, we complete the proof.
\end{proof}

Lemma \ref{varrho_lemma} indicates the maximum value of $\varrho(\xi)$ is achieved at $\xi=1$, which is $\varrho_{max}=\varrho(1)=L^{{1}/{\delta}}$ from \eqref{Z}.
Besides, it is clearly that $\mathcal{Z}(\xi,\varrho) - L$ monotonically increases with $\varrho$ for a given $\xi$.
Generally, we can calculate the unique value of $\varrho(\xi)$ that satisfies \eqref{Z} using the bisection method in the range $[0,\varrho_{max}]$.
For the special case of large antennas, i.e., $N\rightarrow\infty$, we provide an approximate value of $\varrho(\xi)$, denoted as $\varrho^{o}(\xi)$.
Simulation results show that, when $N\ge 20$, the maximum value of $R_S$ calculated based on $\varrho^{o}(\xi)$ is quite close to that based on the exact $\varrho(\xi)$, i.e., $\varrho^{o}(\xi)$ can be a computationally convenient alternative to $\varrho(\xi)$ when $N$ is large.
\begin{corollary}
\textit{
As $N\rightarrow\infty$, $\varrho(\xi)$ in \eqref{Z} approximates to
\begin{align}\label{varrho}
    \varrho^{o}(\xi) = \begin{cases}
    \qquad L^{{1}/{\delta}}, & \xi = 1\\
    \frac{\delta}{1-\xi}\ln\left(
    \frac{{\delta}^{-1}(1-\xi)L
    ^{{1}/{\delta}}}
    {\mathcal{W}\left({\delta}^{-1}(1-\xi)L
    ^{{1}/{\delta}}\right)}\right), & \text{otherwise}
    \end{cases}
\end{align}
    where $\mathcal{W}(\cdot)$ is the Lambert-W function.}
\end{corollary}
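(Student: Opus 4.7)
The plan is to take the limit $N \to \infty$ directly in the defining equation $\mathcal{Z}(\xi,\varrho) = L$ from \eqref{Z} and then invert the resulting transcendental equation using the Lambert-W function. The case $\xi=1$ is immediate: $\mathcal{Z}(1,\varrho) = \varrho^{\delta}$, so the equation becomes $\varrho^{\delta} = L$ and hence $\varrho^{o}(1) = L^{1/\delta}$, matching the first branch of \eqref{varrho}.

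For $\xi \in [0,1)$, I would use the standard limit $(1 + \varrho(1-\xi)/(N-1))^{N-1} \to e^{\varrho(1-\xi)}$ as $N \to \infty$, turning \eqref{Z} into the simpler equation
\begin{equation*}
\varrho^{\delta}\, e^{\varrho(1-\xi)} = L.
\end{equation*}
Taking $\delta$-th roots gives $\varrho\, e^{\varrho(1-\xi)/\delta} = L^{1/\delta}$. To cast this in the canonical form $u e^{u} = z$, I would substitute $u \triangleq \varrho(1-\xi)/\delta$ (legitimate since $1-\xi > 0$), which leads to
\begin{equation*}
u\, e^{u} = \delta^{-1}(1-\xi)L^{1/\delta}.
\end{equation*}
By definition of the Lambert-W function, $u = \mathcal{W}\bigl(\delta^{-1}(1-\xi)L^{1/\delta}\bigr)$, and solving back for $\varrho$ yields $\varrho^{o}(\xi) = \frac{\delta}{1-\xi}\mathcal{W}\bigl(\delta^{-1}(1-\xi)L^{1/\delta}\bigr)$.

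To reach the exact form in \eqref{varrho}, I would invoke the identity $\mathcal{W}(z) = \ln\bigl(z/\mathcal{W}(z)\bigr)$, which follows directly from the defining relation $\mathcal{W}(z)e^{\mathcal{W}(z)}=z$ by taking logarithms. Applying this with $z = \delta^{-1}(1-\xi)L^{1/\delta}$ rewrites $\varrho^{o}(\xi)$ in the stated logarithmic form. The main obstacle, such as it is, is justifying that the interchange of the limit $N\to\infty$ with the (implicit) root-finding in \eqref{Z} is valid; I would argue this via the monotonic and continuous dependence of $\mathcal{Z}(\xi,\varrho)$ on $\varrho$ (noted right after Lemma \ref{varrho_lemma}), which ensures the unique root varies continuously with the parameter $N$ and converges to the unique root of the limiting equation. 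Aside from that, everything is routine calculus and an application of the Lambert-W identity.
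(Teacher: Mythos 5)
Your proposal is correct and follows essentially the same route as the paper: take $N\to\infty$ in \eqref{Z} to obtain $\varrho^{\delta}e^{(1-\xi)\varrho}=L$, handle $\xi=1$ trivially, and invert the transcendental equation via the Lambert-W function. The only cosmetic difference is that you apply $\mathcal{W}$ directly and then use the identity $\mathcal{W}(z)=\ln\bigl(z/\mathcal{W}(z)\bigr)$, whereas the paper reaches the same logarithmic form by passing through the substitution $\nu^{\nu}=t$; your added remark on justifying the limit/root interchange is a minor refinement the paper omits.
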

\begin{proof}
Since $\lim_{N\rightarrow\infty}\left(1+\frac{x}{N}\right)^{-N}
= e^{-x}$, we have $\mathcal{Z}(\xi,\varrho^o)=
    (\varrho^o)^{\delta}e^{(1-\xi)\varrho^o}$ where $\varrho^o\triangleq \lim_{N\rightarrow\infty}\varrho$, and \eqref{Z} transforms to $L=
    (\varrho^o)^{\delta}e^{(1-\xi)\varrho^o}$.
    1) When $\xi=1$, we easily obtain $\varrho^o= L^{\frac{1}{\delta}}$.
2) When $\xi\ne 1$, we find that $\frac{1-\xi}{\delta}L^{\frac{1}{\delta}}=
\frac{(1-\xi)\varrho^o}{\delta}
e^{\frac{(1-\xi)\varrho^o}{\delta}}$.
Let $\mu\triangleq \frac{1-\xi}{\delta}\varrho^o$, and we obtain $\frac{1-\xi}{\delta}L^{\frac{1}{\delta}}=\mu e^\mu\Rightarrow e^{\frac{1-\xi}{\delta}L^{\frac{1}{\delta}}}=e^{\mu e^\mu}$.
We further let $\nu\triangleq e^\mu$ and $t\triangleq e^{\frac{1-\xi}{\delta}L^{\frac{1}{\delta}}}$, such that $\nu^\nu = t\Rightarrow\nu=\frac{\ln t}{\mathcal{W}(\ln t)}$.
The solution $\varrho^o$ can be given by $\varrho^o = \frac{1-\xi}{\delta}\mu= \frac{1-\xi}{\delta}\ln \nu$, with yields the final expression in \eqref{varrho} by substituting in $\nu$ along with $t$.
\end{proof}

Due to the implicit function of $\varrho(\xi)$ on $\xi$, we can hardly derive an explicit expression of $R_S$.
Nevertheless, we still reveal the concavity of $R_S$ on $\xi$, and provide the solution to problem \eqref{rs_max} in the following theorem.
\begin{theorem}\label{opt_par_rs_theorem}
\textit{$R_S$ in \eqref{rs_max} is a concave function of $\xi$.
     The optimal $\xi$ that maximizes $R_S$ is given by
\begin{align}\label{opt_par_rs}
        \xi^*=\begin{cases}
        ~\varnothing, & \kappa\le\varrho_{min}\\
        ~1,& \kappa>\frac{\delta L^{{\alpha}/{2}}+L^{\alpha}}
        {\delta-L^{\alpha}}
        ~\textrm{and}~L<\sqrt[\alpha]{\delta}\\
        ~\xi_r,&\textrm{otherwise}
        \end{cases}
\end{align}
    where $\varrho_{min}\triangleq \varrho(0)$ denotes the minimum value of $\varrho(\xi)$.
    $\xi_r$ is the unique root of    $\frac{dR_S}{d\xi}=0$, where
\begin{equation}\label{drs}
  \frac{dR_S}{d\xi}=\frac{1}{\ln2}\left[\frac{\kappa}{1+\kappa\xi}
  -\frac{\varrho(\xi)+\frac{\xi d\varrho(\xi)}{d\xi}}
  {1+\xi\varrho(\xi)}
  \right].
\end{equation}
}
\end{theorem}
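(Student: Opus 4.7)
The plan is to first establish that $R_S$ is strictly concave in $\xi$ on $[0,1]$, and then use concavity together with an analysis of the sign of $dR_S/d\xi$ at the two endpoints to reduce the maximization to the three cases in \eqref{opt_par_rs}. Concavity guarantees that any interior critical point is the unique global maximizer, and in its absence the maximum is attained at a boundary; this is exactly the trichotomy appearing in the theorem.

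For concavity, I would split $R_S=\log_2(1+\kappa\xi)-\log_2(1+\xi\varrho(\xi))$. The first summand is strictly concave as a logarithm of an affine function, so it suffices to show that $h(\xi)\triangleq 1+\xi\varrho(\xi)$ is log-convex, i.e.\ $h(\xi)h''(\xi)\ge (h'(\xi))^2$. Direct differentiation gives $h'=\varrho+\xi\varrho'$ and $h''=2\varrho'+\xi\varrho''$. I would substitute the closed forms of $\varrho'$ and $\varrho''$ from \eqref{dx1} and \eqref{dx2}. Introducing $A\triangleq\delta+l_2(1-\xi)\varrho$ and using the identity $\varrho-(1-\xi)\varrho'=\varrho(\delta+l_0(1-\xi)\varrho)/A$ already established inside the proof of Lemma~\ref{varrho_lemma}, together with the convenient relation $\delta+\varrho+l_0(1-\xi)\varrho=A+\xi\varrho$, the target inequality should reduce after clearing $A^3$ to a polynomial expression in $\varrho,\xi,\delta,l_0,l_2$ whose coefficients are manifestly non-negative. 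Executing this simplification cleanly and verifying non-negativity coefficient by coefficient is the main obstacle I expect.

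With concavity in hand, $dR_S/d\xi$ is strictly decreasing on $[0,1]$, so the optimizer is determined by the signs of $dR_S/d\xi$ at the endpoints. If $\kappa\le\varrho_{min}=\varrho(0)$, monotonicity of $\varrho$ (Lemma~\ref{varrho_lemma}) yields $\varrho(\xi)\ge\kappa$ for every $\xi\in[0,1]$, so the feasibility condition $\varrho(\xi)<\kappa$ in \eqref{rs_max} fails everywhere and $\xi^*=\varnothing$. Otherwise, setting $\xi=1$ in $\mathcal{Z}(\xi,\varrho)=L$ gives $\varrho(1)=L^{1/\delta}=L^{\alpha/2}$, and \eqref{dx1} gives $\varrho'(1)=L^{\alpha}/\delta$; substituting these into \eqref{drs} and clearing denominators shows that $dR_S/d\xi|_{\xi=1}>0$ is equivalent to $\kappa(1-L^{\alpha}/\delta)>L^{\alpha/2}+L^{\alpha}/\delta$, which (as its right-hand side is positive) forces both $L^{\alpha}<\delta$ and $\kappa>(\delta L^{\alpha/2}+L^{\alpha})/(\delta-L^{\alpha})$; under these conditions concavity puts the maximizer at $\xi^*=1$. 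In every remaining subcase of $\kappa>\varrho_{min}$ we have $dR_S/d\xi|_{\xi=0}=(\kappa-\varrho_{min})/\ln 2>0$ while $dR_S/d\xi|_{\xi=1}\le 0$, so strict monotonicity of $dR_S/d\xi$ and the intermediate value theorem produce a unique $\xi_r\in(0,1]$ solving \eqref{drs}$=0$, which by concavity is the global maximizer $\xi^*=\xi_r$.
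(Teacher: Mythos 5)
Your endpoint analysis (sign of $\frac{dR_S}{d\xi}$ at $\xi=0$ and $\xi=1$, the evaluation $\varrho(1)=L^{\alpha/2}$, $\varrho'(1)=L^{\alpha}/\delta$, and the equivalence of $\frac{dR_S}{d\xi}\big|_{\xi=1}>0$ with $L<\sqrt[\alpha]{\delta}$ and $\kappa>\frac{\delta L^{\alpha/2}+L^{\alpha}}{\delta-L^{\alpha}}$) matches the paper and is correct, as is the suspension case $\kappa\le\varrho_{min}$. The genuine gap is in the concavity step, which you yourself flag as the "main obstacle": the reduction you propose targets a false statement. You split $R_S=\log_2(1+\kappa\xi)-\log_2(1+\xi\varrho(\xi))$ and claim it suffices to show $h(\xi)=1+\xi\varrho(\xi)$ is log-convex, i.e. $hh''\ge (h')^2$. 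This is a $\kappa$-free claim, and it fails for large $L$. Concretely, take the large-$N$ limit ($l_0\to 0$, $l_2\to 1$), $\delta=\tfrac12$ (so $\alpha=4$) and $\ln L=10$, so that $\varrho$ solves $\tfrac12\ln\varrho+(1-\xi)\varrho=10$. At $\xi=0.1$ one gets $\varrho\approx 9.84$, and from \eqref{dx1}--\eqref{dx2}, $\varrho'\approx 10.35$, $\varrho''\approx 22.4$; hence $h\approx 1.98$, $h'=\varrho+\xi\varrho'\approx 10.88$, $h''=2\varrho'+\xi\varrho''\approx 22.9$, giving $hh''\approx 45.5<(h')^2\approx 118$. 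So $\log_2(1+\xi\varrho(\xi))$ is locally \emph{concave} there, not convex, and no amount of coefficient bookkeeping will rescue the polynomial identity you hope for. The concavity of $R_S$ is not term-by-term: it genuinely requires the feasibility condition $\varrho(\xi)<\kappa$, which couples the two logarithms.

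The paper's proof exploits exactly this coupling. It writes out $\frac{d^2R_S}{d\xi^2}$, inserts the lower bound $\frac{d^2\varrho}{d\xi^2}>\frac{2}{\varrho}\bigl(\frac{d\varrho}{d\xi}\bigr)^2$ established in Lemma \ref{varrho_lemma}, and after simplification obtains
\begin{equation}
\frac{d^2R_S}{d\xi^2}<-\frac{1}{\ln 2}\left(\frac{\kappa^2}{(1+\kappa\xi)^2}-\frac{\varrho^2}{(1+\varrho\xi)^2}\right),\nonumber
\end{equation}
which is negative precisely because $\varrho<\kappa$ on the feasible set (the map $t\mapsto t/(1+t\xi)$ is increasing). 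In other words, the excess convexity of the second term is absorbed by the strict concavity margin of the first term, and this only works where the problem \eqref{rs_max} is feasible. To repair your argument, abandon the log-convexity route, keep your bound $\varrho''\ge \frac{2(\varrho')^2}{\varrho}$, and carry the $\kappa$ term along until the final sign comparison; the rest of your case analysis then goes through unchanged.
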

\begin{proof}
   Alice transmits only when $\varrho<\kappa$.
Obviously, if $\kappa\le \varrho_{min}$, then $\varrho<\kappa$ never holds for an arbitrary $\varrho$ since $\varrho_{min}\le\varrho$, such that transmission is suspended.
If $\kappa>\varrho_{min}$, Alice transmits for a $\xi$ that satisfies $\varrho<\kappa$.
 To maximize $R_S$, we first give the second-order derivative of $R_S$ on $\xi$ from \eqref{rs_max}
\begin{equation}\label{drs2}
\frac{d^2R_S}{d\xi^2}=\frac{1}{\ln2}
\left[\frac{-\kappa^2}{(1+\kappa\xi)^2}
  -\frac{2\frac{d\varrho}{d\xi}
 +\xi\frac{d^2\varrho}{d\xi^2}}{(1+\varrho\xi)} +\frac{\left(\varrho+\xi\frac{d\varrho}{d\xi}\right)^2}{(1+\varrho\xi)^2}
\right],\nonumber
\end{equation}
with $\frac{d\varrho}{d\xi}$ and $\frac{d^2\varrho}{d\xi^2}$ given in Lemma \ref{varrho_lemma}.
Substituting $\frac{d^2\varrho}{d\xi^2}
  >\frac{2}{\varrho}\left(\frac{d\varrho}{d\xi}\right)^2>0$ (see Lemma \ref{varrho_lemma}) into the above equation yields
\begin{equation}\label{drs22}
  \frac{d^2R_S}{d\xi^2}<-\frac{1}{\ln2}
  \left(\frac{\kappa^2}{(1+\kappa\xi)^2}
  -\frac{\varrho^2}{(1+\varrho\xi)^2}\right).
\end{equation}
 Since $\varrho<\kappa$, we have $\frac{\kappa^2}{(1+\kappa\xi)^2}
  -\frac{\varrho^2}{(1+\varrho\xi)^2}>0
  \Rightarrow\frac{d^2R_S}{d\xi^2}<0$, i.e., $R_S$ is a \emph{concave} function of $\xi$.

Due to the concavity of $R_S$ on $\xi$, the maximum value of $R_S$ is achieved either at boundaries or at stationary points.
From \eqref{drs}, the boundary values are $\frac{dR_S}{d\xi}|_{\xi=0}
=\frac{\kappa-\varrho_{min}}{\ln2}$ and $\frac{dR_S}{d\xi}|_{\xi=1}=\frac{1}{\ln2}
  \left(\frac{\kappa}{1+\kappa}
  -\frac{L^{{\alpha}/{2}}+\frac{\alpha}{2}
  L^{\alpha}}  {1+L^{{\alpha}/{2}}}\right)$.
Obviously, $\frac{dR_S}{d\xi}|_{\xi=0}>0$.
1) If $\frac{dR_S}{d\xi}|_{\xi=1}>0$, $R_S$ monotonically increases with $\xi$, and the optimal value of $\xi$ is 1, with the corresponding condition directly obtained from $\frac{dR_S}{d\xi}|_{\xi=1}>0$.
2) If $\frac{dR_S}{d\xi}|_{\xi=1}\le0$, $R_S$ first increases and then decreases with $\xi$, and the optimal value of $\xi$ is the unique root of $\frac{dRs}{d\xi}=0$.
\end{proof}

Theorem \ref{opt_par_rs_theorem} shows only for a large $\kappa$ (small estimation error) and a small $L$ (a sparse-eavesdropper scenario or a moderate SOP constraint), allocating full power to the information signal provides a higher secrecy rate than the AN scheme does, otherwise generating AN is advantageous.
 Since $R_S$ is a concave function of $\xi$, we can efficiently calculate the unique root $\xi_r$ of $\frac{dR_S}{d\xi_r}=0$ in \eqref{drs} using the bisection method.
Substituting $\xi^*$ and $\varrho(\xi^*)$ into \eqref{rs_max} yields $R_S^*$.

Although $\xi_r$ can only be calculated numerically, we show how $\xi_r$ is affected by $\kappa$ in the following.
\begin{proposition} \label{opt_par_tau_proposition2}
    \textit{$\xi_r$ in \eqref{opt_par_rs} monotonically increases with $\kappa$.
    }
\end{proposition}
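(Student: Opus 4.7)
The plan is to prove Proposition~\ref{opt_par_tau_proposition2} by an implicit function argument applied to the stationarity condition $\frac{dR_S}{d\xi}\big|_{\xi=\xi_r}=0$, exactly analogous to the strategy used in Proposition~\ref{opt_par_tau_proposition}, but now exploiting the concavity established in Theorem~\ref{opt_par_rs_theorem}.

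First, I would set $F(\xi,\kappa)\triangleq \frac{dR_S}{d\xi}$ as given in \eqref{drs}, so that $\xi_r$ is defined implicitly by $F(\xi_r,\kappa)=0$. The crucial structural observation, which I would emphasize at the outset, is that $\varrho(\xi)$ is determined entirely by \eqref{Z} and therefore depends only on the parameters $\epsilon$, $\lambda_E$, $P$, $N$, $\alpha$, but \emph{not} on $\kappa$. Consequently, all $\kappa$-dependence in $F$ is concentrated in the single term $\kappa/(1+\kappa\xi)$, which dramatically simplifies the derivatives.

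Next, I would invoke the implicit function rule \cite{Jittorntrum1978Implicit} to write
\begin{equation}
\frac{d\xi_r}{d\kappa}=-\frac{\partial F/\partial\kappa}{\partial F/\partial\xi}\bigg|_{\xi=\xi_r}.\nonumber
\end{equation}
For the numerator, a direct differentiation gives $\frac{\partial F}{\partial\kappa}=\frac{1}{\ln 2}\cdot\frac{1}{(1+\kappa\xi)^2}>0$, since the $\varrho$-terms drop out. For the denominator, I would simply note that $\partial F/\partial\xi = d^2 R_S/d\xi^2$, which was already proven strictly negative in Theorem~\ref{opt_par_rs_theorem} (the concavity of $R_S$ on $\xi$). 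Combining these two signs immediately yields $d\xi_r/d\kappa>0$, completing the proof.

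The main obstacle, if any, would have been showing the sign of $\partial F/\partial\xi$, because $\varrho(\xi)$ is only defined implicitly through \eqref{Z}; however, this work has already been done in Theorem~\ref{opt_par_rs_theorem} via the bounds $\varrho-(1-\xi)\varrho'>0$ and $\varrho''>\tfrac{2}{\varrho}(\varrho')^2$, so the present proposition reduces to little more than citing that concavity and observing that $\varrho$ is $\kappa$-free. Therefore the argument is essentially a one-line implicit-differentiation calculation once the two non-trivial facts (concavity of $R_S$ and $\kappa$-independence of $\varrho$) are invoked.
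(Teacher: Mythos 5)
Your proposal is correct, and it rests on the same core idea as the paper's proof (implicit differentiation of the stationarity condition, with the sign of $d\xi_r/d\kappa$ read off from a positive numerator and a signed denominator), but the execution is genuinely different and noticeably lighter. The paper first substitutes the explicit formula \eqref{dx1} for $\tfrac{d\varrho}{d\xi}$ into $\tfrac{dR_S}{d\xi}=0$ and clears denominators to obtain the algebraic equation $\mathcal{A}(\xi_r)=0$ in \eqref{A}; it then has to establish $\partial\mathcal{A}/\partial\xi_r>0$ from scratch, via the auxiliary quantities $\psi_1,\psi_2$ and a re-arrangement of $\mathcal{A}=0$, and separately show $\partial\mathcal{A}/\partial\kappa<0$ using the stationarity condition again. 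You instead differentiate $F(\xi,\kappa)=\tfrac{dR_S}{d\xi}$ directly: the observation that $\varrho(\xi)$ is determined by \eqref{Z} and hence is $\kappa$-free makes $\partial F/\partial\kappa=\tfrac{1}{\ln 2}(1+\kappa\xi)^{-2}>0$ immediate, and $\partial F/\partial\xi=\tfrac{d^2R_S}{d\xi^2}<0$ is exactly the concavity already proved in Theorem~\ref{opt_par_rs_theorem}, so no fresh sign bookkeeping is needed. This is a clean comparative-statics argument that reuses existing results; the paper's route, by working with the polynomial form $\mathcal{A}$, is self-contained in the variables $(\xi_r,\varrho_r,\kappa)$ and yields intermediate inequalities (e.g.\ $\varrho_r^2\xi_r^2<\delta+l_2(1-\xi_r)\varrho_r$) that it can reuse for the claims $\tfrac{d\xi_r}{d\lambda_E}<0$ and $\tfrac{d\xi_r}{d\epsilon}>0$, where the parameter enters through $\varrho$ rather than through the $\kappa$-term and your shortcut would not apply verbatim. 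One small point worth making explicit in your write-up: the strict concavity bound \eqref{drs22} is valid under $\varrho<\kappa$, so you should note that $\varrho(\xi_r)<\kappa$ holds at the stationary point (it is the feasibility constraint of \eqref{rs_max}, and also follows from $F(\xi_r,\kappa)=0$ together with $\tfrac{d\varrho}{d\xi}>0$), which guarantees $\partial F/\partial\xi\neq 0$ and legitimizes the implicit function theorem.
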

\begin{proof}
From \eqref{drs}, $\frac{dR_S}{d\xi_r}=0$ transforms to $\mathcal{A}(\xi_r) = 0$, and
\begin{equation}\label{A}
\mathcal{A}(\xi_r)= (\kappa\xi_r^2-l_0\xi_r+l_2)\varrho_r^2
  +\left(l_2\kappa\xi_r-l_2\kappa+\delta\right)\varrho_r
  -\delta\kappa,
\end{equation}
with $\varrho_r\triangleq \varrho(\xi_r)$.
Using the derivative rule for implicit functions with the equation $\mathcal{A}(\xi_r) = 0$ yields
\begin{equation}\label{dpar_dkappa}
  \frac{d\xi_r}{d\kappa}=-\frac{{\partial \mathcal{A}}/{\partial\kappa}}{{\partial \mathcal{A}}/{\partial\xi_r}}
  =-\frac{\varrho_r^2\xi_r^2-\left(\delta
  +l_2(1-\xi_r)\varrho_r\right)}
 {\psi_1(\xi_r)+\psi_2(\xi_r)\frac{d\varrho_r}{d\xi_r}},
\end{equation}
where $\psi_1(\xi_r)=(1+2\kappa\xi_r)\varrho_r^2
+l_2(\kappa-\varrho_r)\varrho_r$ and $\psi_2(\xi_r)
=2\left(\kappa\xi_r^2+l_2\right)\varrho_r+
\left(l_2\kappa\xi_r+\delta\right)
-2l_0\xi_r\varrho_r-l_2\kappa$.
Obviously, $\kappa>\varrho_r\Rightarrow\psi_1(\xi_r)>0$ and $\frac{d\varrho_r}{d\xi_r}>0$ (see Lemma \ref{varrho_lemma}).
$\mathcal{A}(\xi_r) = 0$ can be further reformed as $\left(\kappa\xi_r^2+l_2)\varrho_r
+(l_2\kappa\xi_r+\delta\right)
=l_0\xi_r\varrho_r+l_2\kappa+\frac{\delta\kappa}{\varrho_r}$, substituting which into $\psi_2(\xi_r)$ directly yields $\psi_2(\xi_r)>0$.
Hence we have $\frac{\partial \mathcal{A}}{\partial\xi_r}>0$.
Leveraging \eqref{drs}, $\frac{dR_S}{d\xi_r}=0$ can be reformed by
$\xi_r^2\frac{d\varrho_r}{d\xi_r}=1-\frac{1+\varrho_r\xi_r}
{1+\kappa\xi_r}<1$.
Substituting $\frac{d\varrho_r}{d\xi_r}$ in \eqref{dx1} into this inequality yields
$\varrho_r^2\xi_r^2<\left(\delta+l_2(1-\xi_r)\varrho_r\right)$, i.e., $\frac{\partial \mathcal{A}}{\partial\kappa}<0$.
  With $\frac{\partial \mathcal{A}}{\partial\xi_r}>0$ and $\frac{\partial \mathcal{A}}{\partial\kappa}<0$, we see from \eqref{dpar_dkappa} that $\frac{d\xi_r}{d\kappa}>0$, which completes the proof.
\end{proof}

Proposition \ref{opt_par_tau_proposition2} indicates that, \emph{when channel estimation error becomes larger, if we aim to increase the secrecy rate under a SOP constraint, we should increase the AN power}, just as shown in Fig. \ref{PAR_RS_EP_LE}.
The reason is: channel estimation error heavily degrades the main channel while has no effect on the wiretap channels.
For a large estimation error, although increasing the information signal power improves Bob's capacity, the improvement is not significant.
On the contrary, increasing AN power always greatly deteriorates the wiretap channels regardless of CSI imperfection.
Therefore, when estimation error becomes larger, increasing AN power is more beneficial to the secrecy rate than increasing signal power.
Nevertheless, transmission is suspended if $\tau$ exceeds a certain value, which corresponds to the case $\kappa\le\varrho_{min}$ as indicated in Theorem \ref{opt_par_rs_theorem}.
We can also prove $\frac{d\xi_r}{d\lambda_E}<0$ and $\frac{d\xi_r}{d\epsilon}>0$ in a similar way as the proof of Proposition \ref{opt_par_tau_proposition2}.
 Due to space limit, we omit the relevant proofs, and the results are verified in Fig. \ref{PAR_RS_EP_LE}.
  We see that the optimal PAR $\xi^*$ decreases for a larger $\lambda_E$ or a smaller $\epsilon$.
 It means that, when transmission is more vulnerable to wiretapping, we should increase AN power.

\begin{figure}[!t]
\centering
\includegraphics[width =3.0in]{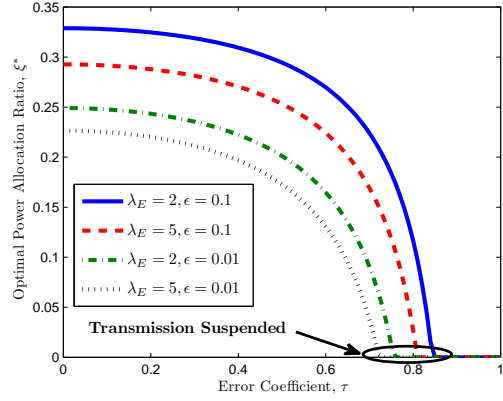}
\caption{Optimal PAR $\xi^*$ versus $\tau$ for different values of $\lambda_E$ and $\epsilon$, with $P=0$dBm, and $N=\gamma = 20$.}
\label{PAR_RS_EP_LE}
\vspace{-0.0 cm}
\end{figure}

\begin{figure}[!t]
\centering
\includegraphics[width = 3.0in]{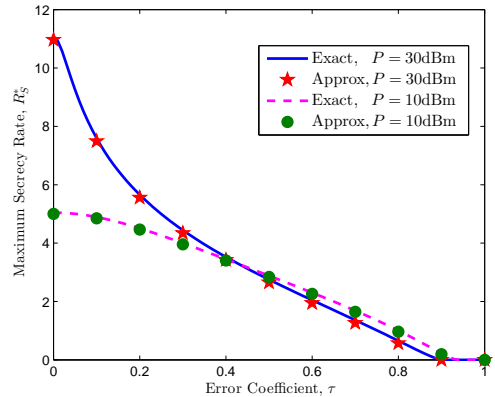}
\caption{Maximum secrecy rate $R_S^*$ versus $\tau$ for different values of $P$, with $N=\gamma = 20$, $\lambda_E=2$, and $\epsilon = 0.01$.
``Approx" corresponds to the value of $\varrho^o(\xi)$ in \eqref{varrho} as opposed to the exact value of $\varrho(\xi)$ obtained from \eqref{Z}.}
\label{RS_TAU_P}
\vspace{-0.2 cm}
\end{figure}

Fig. \ref{RS_TAU_P} depicts the maximum secrecy rate $R_S^*$ versus $\tau$.
The approximated value of $R_S^*$ is quite close to the exact one.
We observe that $R_S^*$ monotonically decreases with $\tau$.
Interestingly, $R_S^*$ increases with $P$ at the small $\tau$ region, whereas decreases with it at the large $\tau$ region.
The underlying reason is just similar to the explanation for the intersection in Fig. \ref{SOP_TAU_P_RS}.

\section{Conclusions}
In this correspondence, we investigate the AN-aided multi-antenna transmission under imperfect CSI against PPP Eves.
We provide explicit solutions of the optimal PARs with channel estimation errors for minimizing the SOP under a secrecy rate constraint and for maximizing the secrecy rate subject to a SOP constraint, respectively.
We strictly prove that, when the channel estimation error becomes larger, we should increase the information signal power if we aim to decrease the SOP, whereas we should increase the AN power if we aim to increase the secrecy rate.

\end{document}